\documentclass{article}

\usepackage{arxiv}

\usepackage[utf8]{inputenc} 
\usepackage[T1]{fontenc}    
\usepackage{hyperref}       
\usepackage{url}            
\usepackage{booktabs}       
\usepackage{amsfonts}       
\usepackage{nicefrac}       
\usepackage{microtype}      
\usepackage{lipsum}		
\usepackage{graphicx}
\usepackage{natbib}
\usepackage{doi}

\usepackage{algorithmic}
\usepackage{mathtools}
\usepackage{amssymb,amsthm, mathtools}
\usepackage{relsize}
\usepackage{subfigure}

\newtheorem*{thm}{Theorem}
\newtheorem*{ex}{Example}

\title{A Non-Binary Method for Finding Interpolants:\\ Theory and Practice}

\author{ \href{https://orcid.org/0000-0003-4170-8665}{\includegraphics[scale=0.06]{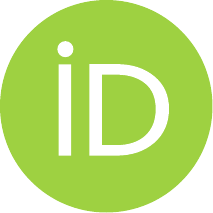}\hspace{1mm}Adam Trybus} \\
	Institute of Philosophy\\
	Jagiellonian University\\
	52 Grodzka St., Kraków 31-044, Poland\\
	\texttt{adam.trybus@uj.edu.pl} \\
	\And
	\hspace{1mm}Karolina Rożko \\
    The University of Zielona Góra, Poland. \\
    \AND
	\hspace{1mm}Tomasz Skura \\
    The University of Zielona Góra, Poland. \\
}

\renewcommand{\shorttitle}{A Non-Binary Method for Finding Interpolants}

\hypersetup{
pdftitle={A Non-Binary Method for Finding Interpolants: Theory and Practice},
pdfsubject={cs.LO},
pdfauthor={Adam Trybus, Karolina Rożko, Tomasz Skura},
pdfkeywords={proof systems, refutation systems, interpolation, classical propositional logic},
}

\begin{document}
\maketitle

\begin{abstract}
	We describe a new method of finding interpolants for classical logic using certain refutation system as a starting point. Refutation can be thought of as an alternative approach to the analysis of formal systems: instead of focusing on which formulas provably belong to a given logic, it shows which formulas are to be rejected. Thus, it provides a mirror proof system. As it turns out, the benefits of such an approach go well beyond the area of refutation calculi themselves. We provide one such example in the shape of an interpolant-searching method. To be sure, a number of such methods are already in use. The novelty of our proposal lies in the fact that it can be considered as based on a non-binary version of resolution.
\end{abstract}

\keywords{proof systems \and refutation systems \and interpolation \and classical propositional logic}

\section{Introduction}
Our goal is twofold. First, to describe a new method of finding interpolants in classical propositional logic; and second, to demonstrate the practical applicability of this method by providing an appropriate implementation and testing its performance.

In broad strokes, the interpolation theorem for FOL states that given two formulas $A$ and $B$ sharing at least one predicate symbol and such that $A \to B$ is valid, there exists a formula $C$ (called \emph{an interpolant}) built from the predicate symbols shared by $A$ and $B$, such that $A \to C$ and $C \to B$ are valid (see \cite{Craig1957a}). In case of classical propositional logic, the condition is that $A$ and $B$ have to share at least one propositional variable.

The main theoretic result of this paper is a new proof of the theorem. Our approach stems from the research on the so-called \emph{refutation systems}, where the focus is on rules of \emph{rejection} of formulas. A refutation system is a collection of non-valid formulas (refutation axioms) together with (refutation) rules preserving non-validity. This concept was introduced by \L ukasiewicz in \cite{Lukasiewicz1957} (for more on refutation systems see e.g. \cite{Skura2011}). We make use of such a system designed for classical propositional logic (see \cite{Skura2013}, p. 110 -- 115) and show how it can be used in an interpolant-searching procedure. We implement this procedure in a form of a Python script and present the results of an analysis of its performance.

\section{Preliminaries}

We define the syntactical and semantic notions in line with the usual practice.

\paragraph{Syntax.} Lowercase letters $p, q, \ldots$ --- called \emph{propositional variables} --- together with two \emph{constants} $\bot$ (\emph{Falsum}) and $\top$ (\emph{Verum}) form a set of \emph{atomic formulas} (\textit{AT}). The set \textit{FOR} of \emph{well-formed formulas} (the elements of which are denoted by uppercase letters $A, B, \ldots$ possibly with subscripts) is generated in a standard way from the set \textit{AT} and the following connectives: $\neg$ (negation); $\wedge$ (conjunction); $\vee$ (disjunction); $\rightarrow$ (implication) and $\equiv$ (equivalence). A \emph{literal} $l$ is either a propositional variable (a \emph{positive} literal) or its negation (a \emph{negative} literal). We call \textit{l}$^*$ the \emph{complement} of $l$ and define it as follows. If $l$ is a positive literal $p$, then $l^*$ is $\lnot p$ and if $l$ is a negative literal $\lnot p$, then $l^*$ is $p$. In the remainder of the article, we assume that whenever $\mathrm{m},\mathrm{n}$ are used, these stand for natural numbers: $\mathrm{m}, \mathrm{n} \geq 0$.

We denote sets (most cases, sets of formulas) in the following way:\linebreak $\mathcal{X}, \mathcal{Y}, \ldots$, if needs be extending this notation with subscripts. Let $\mathcal{X}=\{A_1, ...,A_{\mathrm{n}} \}$, where each $A_i \in FOR$. The expression $\bigwedge \mathcal{X}$ stands for the formula $A_1 \wedge~...~\wedge A_{\mathrm{n}}$, whereas $\bigvee \mathcal{X}$ stands for $ A_1 \vee ... \vee A_{\mathrm{n}}$. If $\mathcal{X}$ is a finite (possibly empty) set of literals, then by $\mathcal{X}^*$ we denote the set where every literal of $\mathcal{X}$ is replaced by its complement and call $\bigwedge \mathcal{X}^*$ a \emph{complement} of $\bigvee \mathcal{X}$ and $\bigvee \mathcal{X}^*$ a \emph{complement} of $\bigwedge \mathcal{X}$. (If $\mathcal{X} = \emptyset$, then $\bigvee \mathcal{X} = \bot$ with $\bigwedge \mathcal{X}^* = \top$ and $\bigwedge \mathcal{X} = \top$ with $\bigvee \mathcal{X}^* = \top$.) We extend this notation in the following way: when $\mathcal{X}$ is a set of formulas $A_1, \ldots, A_{\mathrm{n}}$ that are all either disjunctions or conjunctions of literals, then by $\mathcal{X}^*$ we mean the set containing $A_1^*, \ldots, A_{\mathrm{n}}^*$. 

A \emph{clause} is a formula of the form $\bigvee \mathcal{S}$, where $\mathcal{S}$ is a finite set of literals.\footnote{Note that our definitions do not allow any given literal to occur in a clause more than once.} We say that a formula $A$ is in a \emph{conjunctive normal form} (CNF), if $A$ is of the form $B_1 \wedge \ldots \wedge B_{\mathrm{m}}$, where each $B_{\mathrm{i}}$ is a clause; $A$ is in a \emph{disjunctive normal form} (DNF), if $A$ is of the form $B_1 \vee \ldots \vee B_{\mathrm{m}}$, where each $B_{\mathrm{i}}$ is the complement of a clause.

We introduce the following notational convention. Let $\mathcal{X} = \{A_1, \ldots, A_{\mathrm{n}} \}$ and $\mathcal{Y} = \{B_1, \ldots, B_{\mathrm{m}} \}$. We define $\mathcal{X} \longrightarrow \mathcal{Y}$ to be $\bigwedge \mathcal{X} \to \bigvee \mathcal{Y}$. We also write $A_1, \ldots, A_{\mathrm{n}} \longrightarrow B_1, \ldots, B_{\mathrm{m}}$, for $\mathcal{X} \longrightarrow \mathcal{Y}$. Any formula of this shape with $\bigwedge\mathcal{X}$ in CNF and $\bigvee \mathcal{Y}$ in DNF is said to be simply in a \emph{normal form}. Finally, consider a formula $\bigwedge \mathcal{X} \rightarrow \bigvee \mathcal{Y}$ in a normal form and put $\mathcal{Z} = \mathcal{X} \cup \mathcal{Y}^*$. Any formula of the shape $\mathcal{Z} \longrightarrow \bot$ is said to be in a \emph{clausal normal form}. The \emph{rank} of a (formula in a) clausal normal form is the number of literals $l$ such that $l$~and~$l^*$ occur in distinct clauses in $\mathcal{Z}$.


\paragraph{Semantics.} Consider the set $\mathcal{V}$ of all \emph{classical valuation} functions $v:  AT \longrightarrow \{1,0\}$ (that is the functions extended to the set $FOR$ by means of classical interpretation of the connectives). We say that $A$ is \emph{valid} and write $\models A$, if $v(A) = 1$ for any classical valuation function $v \in \mathcal{V}$. The set of all valid formulas will be denoted as  $\mathcal{CL}$. We recall that every formula $A$ can be converted to a formula $A'$ with the property $\models A \equiv A'$ and such that $A'$ is either in CNF or DNF.\\

As usual, we abbreviate the expression `if and only if' as 'iff'.

\section{Previous work}

There is no shortage of interpolation methods. The more well-known theoretical approaches include those described by Maehara (see \cite{Takeuti1987}), Kleene (see \cite{Kleene1967}) and Smullyan (see \cite{Smullyan1968}). In this article, however, we are more interested in the practically-applicable ones. The article \cite{Bonacina2015} is a recent survey of the most popular approaches in terms of what the authors refer to as the \emph{interpolation systems} (essentially: procedures for finding interpolants) as used in the context of automated theorem-proving. There, one finds a description of a system that works by producing interpolants of the same shape as is the case with our method. On that level, the differences between our approach and the one presented there are only superficial, essentially boiling down to notation. This system is also described in \cite{DSilva2010}, where it is referred to as HKP-system\footnote{The acronym comes from the initials of the researchers who independently worked on this system: Huang (\cite{Huang1995}), Kraji\v{c}ek (\cite{Krajicek1997}) and Pudl\'{a}k (\cite{Pudlak1997}).} and is presented as one of the two main approaches. The first important difference, however, is that the proofs of the interpolation theorem that form the basis of this system are arguably quite involved (part of the reason being that these are presented in a more general context of first-order logic) and thus do not seem to be geared towards potential practical uses. More importantly though, the system works by utilising resolution proofs in creating interpolants. And this we consider a crucial difference between what we propose and HKP-system (this applies in fact to any system described in \cite{DSilva2010}). In our approach, starting from more or less standard refutation systems for classical logic we arrived at an interpolation system that can be thought of as utilising \emph{non-binary} resolution. Therefore, despite a number of similarities, our proposal stands out in the following ways: (i) the proof of the main theorem is simple and (ii) lends itself well to potential implementations due to its constructive manner, and (iii) the fact that our system does not use binary resolution means that it has the potential to produce results in a smaller number of steps (we come back to this point later).

\section{Refutation}

We take for granted all the notions used in the standard approach to proof theory and apply them in the setting of refutation. Intuitively, the idea is to propose a sort of a mirror system, where instead of proving that certain formulas are valid, we show that they can be refuted, i.e. are non-valid. By analogy to a more standard proof system, a \textit{refutation system} is a pair that consists of~a set of~\textit{refutation axioms} and a set of~\textit{refutation rules}. The rules are of the following shape: 

\begin{center}
\noindent $\mathlarger{\frac{\textstyle B_1;\ldots;B_{\mathrm{m}}}{\textstyle B}}$,\\
\end{center}

\noindent where all the elements are formulas. We say that a formula $A$ is \emph{refutable}, denoted $\dashv A$, if $A$ is derivable in this system. In the context of classical propositional logic we are interested in refutation systems with the following property: for any formula $A$ we have $\dashv A$ iff $A \notin \mathcal{CL}$. To this end, we need to ensure that the refutation axioms themselves are non-valid and that the refutation rules preserve non-validity. Consider the following refutation system introduced in \cite{Skura2013} (p. 111).\\


{\textit{Refutation axioms:}} every clausal normal form $\mathcal{Z} \longrightarrow \bot$ of rank $0$ such that $\bot \notin \mathcal{Z}$.  

\bigskip

{\textit{Refutation rules:}}\\
\begin{center}

\noindent $\mathlarger{\frac{C_1,...,C_{\mathrm{m}},\mathcal{E} \longrightarrow \bot}{l \vee C_1,~...,~l \vee C_{\mathrm{m}}, l^* \vee D_1,~...,~l^* \vee D_{\mathrm{n}}, \mathcal{E} \longrightarrow \bot}}$\\

\bigskip

\noindent $\mathlarger{\frac{D_1,...,D_{\mathrm{n}},\mathcal{E} \longrightarrow \bot}{l \vee C_1,~...,~l \vee C_{\mathrm{m}}, l^* \vee D_1,~...,~l^* \vee D_{\mathrm{n}}, \mathcal{E} \longrightarrow \bot}}$,\\

\end{center}

\bigskip

\noindent where all $C_{\mathrm{i}}$, $D_{\mathrm{j}}$ are clauses and $\bigwedge \mathcal{E}$ is a CNF such that neither $l$ nor $l^*$ occurs in the elements of $\mathcal{E}$. In \cite{Skura2013} (p. 112) it is proved that $A \notin \mathcal{CL}$ iff $\dashv A$. A note on how the system operates. A quick glance ensures us that these rules share the bottom element and differ only in the top. Essentially, the rules work by eliminating a pair of literals $l, l^*$ from the bottom formula. We can think of them as having the following shape: $\frac{\textstyle F_1}{\textstyle F}$ and $\frac{\textstyle F_2}{\textstyle F}$, respectively. Now, $F_1$, apart from the neutral $\mathcal{E}$, contains what is left from the clauses that contained $l$ in $F$, whereas $F_2$ on top of $\mathcal{E}$ consists of what is left from the clauses that contained $l^*$ in $F$. We have that $F$ is not valid iff either $F_1$ is not valid or $F_2$ is not valid. This leads us to the following property of this refutation system:\\

($\dagger$) $F$ is valid iff both $F_1$ and $F_2$ are valid.\\

As we shall see, the property ($\dagger$) is crucial for our purposes. Under a different guise, it has been explored in \cite{Davis1959} but to the best of our knowledge it has not been used in any interpolant-finding procedure so far. The refutation axioms used in this system can, admittedly, be quite long but their advantage from our point of view is that these can be easily seen to be non-valid, since in such a formulas, there is no pair of literals $l, l^*$ occurring in distinct clauses.

\section{Theoretical results --- the interpolant-finding procedure}
In order to use the above refutation system as a starting point of searching for interpolants, it is perhaps more convenient to re-write it in the following way.\\
\bigskip

{\textit{Refutation axioms:}} every normal form $\mathcal{X} \longrightarrow \mathcal{Y}$ of rank $0$, such that $\bot \notin \mathcal{X}$ and $\top \notin \mathcal{Y}$. \\ 
\bigskip

{\textit{Refutation rules:}}\\

\begin{center}
\begin{scriptsize}

\noindent $\mathlarger{\frac{A_1,\ldots , A_{\mathrm{m}}, \mathcal{E}_1 \longrightarrow C_{1},\ldots , C_{\mathrm{p}}, \mathcal{E}_2}{A_1 \vee l \ldots A_{\mathrm{m}} \vee l, B_1 \vee l^*,\ldots , B_{\mathrm{o}} \vee l^*, \mathcal{E}_1 \longrightarrow C_{1} \wedge l^*,\ldots , C_{\mathrm{p}} \wedge l^*, D_{1} \wedge l, \ldots ,D_{\mathrm{r}} \wedge l, \mathcal{E}_2}}$\\
\bigskip


\noindent $\mathlarger{\frac{B_1,\ldots , B_{\mathrm{o}}, \mathcal{E}_1 \longrightarrow D_{1},\ldots ,D_{\mathrm{r}}, \mathcal{E}_2}{A_1 \vee l \ldots A_{\mathrm{m}} \vee l, B_1 \vee l^*,\ldots , B_{\mathrm{o}} \vee l^*, \mathcal{E}_1 \longrightarrow C_{1} \wedge l^*,\ldots , C_{\mathrm{p}} \wedge l^*, D_{1} \wedge l,\ldots ,D_{\mathrm{r}} \wedge l, \mathcal{E}_2}}$,\\
\bigskip

\end{scriptsize}
\end{center}

\noindent where $A_{\mathrm{i}}$, $B_{\mathrm{j}}$ are clauses, $C_{\mathrm{k}}$, $D_{\mathrm{l}}$ are complements of clauses, $\mathcal{E}_1$ is a set of clauses, $\mathcal{E}_2$ is a set of complements of clauses and neither $l$ nor $l^*$ occurs in $\mathcal{E}_1$ and $\mathcal{E}_2$. Intuitively, these rules can be described as follows. Each application of either rule works by eliminating $l, l^*$ from the bottom formula, which is of the form $\mathcal{X} \longrightarrow \mathcal{Y}$. Observe that the rules differ only in their top elements. In both cases $\mathcal{E}_1$ and $\mathcal{E}_2$ are present but apart from that the first rule contains only the remainders of the clauses from $\mathcal{X}$ that contained $l$ and the remainders of the clauses from $\mathcal{Y}$ that contained $l^*$, whereas the second rule contains only the remainders of the clauses from $\mathcal{X}$ that contained $l^*$ and the remainders of the clauses from $\mathcal{Y}$ that contained $l$. We can represent the above rules as $\frac{\textstyle G_1}{\textstyle G}$ and $\frac{\textstyle G_2}{\textstyle G}$, respectively.\\

\noindent Now, we get the following version of the ($\dagger$) property:\\

($\ddag$) $G$ is valid iff both $G_1$ and $G_2$ are valid.\\

Note that for our purposes this version of the system is primary to the original formulation we have presented above. In fact, the rules $\frac{\textstyle F_1}{\textstyle F}$ and $\frac{\textstyle F_2}{\textstyle F}$ can be viewed as resulting from all the formulas in $\frac{\textstyle G_1}{\textstyle G}$ and $\frac{\textstyle G_2}{\textstyle G}$ being changed from normal form to clausal normal form. Since our aim is to describe an interpolant-finding procedure, we --- naturally --- focus on normal forms. However, as it will become evident, we also make heavy use of clausal normal forms. 

Consider formulas $X$ and $Y$, such that $\models X \to Y$. According to what has been stated above, we can assume that $X$ is in CNF and $Y$ is in DNF. Then we see that this formula can in fact be represented as $\mathcal{X} \longrightarrow \mathcal{Y}$. If $X$ and $Y$ have no propositional variables in common, then we have that either (a) $\models \bigvee \mathcal{Y}$ or (b) $\models \neg \bigwedge \mathcal{X}$. In case (a) $\models \mathcal{X} \longrightarrow \top$ and $\models \top \longrightarrow \mathcal{Y}$, whereas in case (b) $\models \mathcal{X} \longrightarrow \bot$ and $\models \bot \longrightarrow \mathcal{Y}$. 

Seeing that when dealing with the formulas in normal forms, one can still obtain interpolant-like formulas, we extend the definition of an interpolant by not insisting that the two formulas share variables. We say that $I$ is an \emph{(extended) interpolant} of $X$ and $Y$, if both $\models X \to I$ and $\models I \to Y$ and if there is a pair of literals $l, l*$ occurring in different clauses of $\mathcal{X}\cup\mathcal{Y}^*$, then the variables of $I$ are among those common to $X$ and $Y$.\\

We have the following theorem.


\begin{thm}[Extended Interpolation Theorem]
All formulas $X$ and $Y$ such that $\models X \to Y$ have an interpolant.
\end{thm}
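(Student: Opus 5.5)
We proceed by induction on the rank of the normal form $\mathcal{X} \longrightarrow \mathcal{Y}$, using the property ($\ddag$). First, $X$ and $Y$ are replaced by an equivalent CNF $\bigwedge\mathcal{X}$ and DNF $\bigvee\mathcal{Y}$. These conversions do not introduce new variables, so an interpolant for the normal forms is also one for $X$ and $Y$. We then prove the stronger claim that every valid normal form $\mathcal{X} \longrightarrow \mathcal{Y}$ has an extended interpolant $I$ in which every variable of $I$ is a variable $p$ such that $p$ or $\neg p$ occurs in a clause coming from $\mathcal{X}$ and its complement occurs in a clause coming from $\mathcal{Y}^*$. Equivalently, $p$ occurs in both $X$ and $Y$. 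Only such ``shared'' pairs need to carry information into $I$.

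\textbf{Base case (rank $0$).} No pair $l,l^*$ occurs in distinct clauses of $\mathcal{Z}=\mathcal{X}\cup\mathcal{Y}^*$. If $\bot\notin\mathcal{X}$ and $\top\notin\mathcal{Y}$, the formula is a refutation axiom and hence non-valid, contradicting the assumption. So either $\bot\in\mathcal{X}$, in which case $I=\bot$ works, or $\top\in\mathcal{Y}$, in which case $I=\top$ works. Both constants contain no variables, so the variable condition holds trivially.

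\textbf{Inductive step.} Pick a literal $l$ such that $l,l^*$ occur in distinct clauses. This gives the two premises $G_1$ and $G_2$ of the rewritten rules, and by ($\ddag$) both are valid. Each has strictly smaller rank, because $l$ and $l^*$ are eliminated and no new complementary pairs appear. By the induction hypothesis we obtain interpolants $I_1$ for $G_1$ and $I_2$ for $G_2$, and we combine them according to where $l$ comes from.

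If the variable of $l$ occurs only on the $\mathcal{X}$ side, put $I = I_1\vee I_2$. From $\bigwedge\mathcal{X}$ one derives, by case analysis on $v(l)$, the antecedent of $G_1$ or of $G_2$. The $\mathcal{Y}$-side formulas are unaffected, since $l$ does not occur in $\mathcal{Y}$.

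If the variable of $l$ occurs only on the $\mathcal{Y}$ side, dually put $I = I_1\wedge I_2$.

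If $l$ is shared, put $I=(I_1\wedge l^*)\vee(I_2\wedge l)$ or the dual form, orienting $l$ to match the rules.

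In each case we then check semantically that $\models \bigwedge\mathcal{X}\to I$ and $\models I\to\bigvee\mathcal{Y}$. The check fixes $v(l)$ and observes that $G$ restricted to $v(l)=0$ (respectively $v(l)=1$) reduces to $G_1$ (respectively $G_2$), up to discarding satisfied clauses. The only new variable that can appear in $I$ is that of $l$, and it appears only in the shared case, so the variable condition is preserved.

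\textbf{Main obstacle.} The delicate step is making the case distinction mesh with the rule format. In the rules, $G_1$ keeps the remainders of $\mathcal{X}$-clauses containing $l$ together with the remainders of $\mathcal{Y}$-members containing $l^*$, so a single premise mixes both sides. The verification of $\models\bigwedge\mathcal{X}\to I$ therefore needs care. I would first handle it via clausal normal forms, with each clause labelled by its origin ($\mathcal{X}$ or $\mathcal{Y}^*$). The construction then becomes a labelled version of non-binary resolution on $l$, and the combination step is checked with the standard invariant: under every valuation satisfying the $\mathcal{X}$-labelled clauses, $I$ is true, and under every valuation satisfying the $\mathcal{Y}^*$-labelled clauses, $I$ is false.
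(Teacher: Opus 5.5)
Your proof is correct and follows the paper's own argument: induction on rank, $\bot$ or $\top$ at rank $0$, and recombination of the interpolants of $G_1$ and $G_2$ as $I_1\vee I_2$, $I_1\wedge I_2$, or a case split on $l$. Your $(I_1\wedge l^*)\vee(I_2\wedge l)$ is equivalent to the paper's $(l\vee I_1)\wedge(l^*\vee I_2)$, since both reduce to $I_1$ when $v(l)=0$ and to $I_2$ when $v(l)=1$. One small slip is that your strengthened variable condition is not equivalent to ``$p$ occurs in both $X$ and $Y$'' (take $p$ occurring only positively in $\mathcal{X}$ and only as $\neg p$ in $\mathcal{Y}$), but it implies that condition and is still preserved (in the shared case the two sides must carry complementary occurrences of $l$, since $l^*$ occurs somewhere), so the theorem is unaffected.
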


\begin{proof} Let $G = \mathcal{X} \longrightarrow \mathcal{Y}$ be a formula in a normal form constructed on the basis of $X$ and $Y$ (as indicated above) such that $\models (X \to Y) \equiv G$ and let $G' = \mathcal{X} \cup \mathcal{Y}^* \longrightarrow \bot$ be a respective clausal normal form.\footnote{It is very easy to change from a normal form to a clausal normal form: this is done simply by complementing $\mathcal{Y}$ and `shifting' it to the other side of the implication. In the proof, we freely move from one form to another whichever is more convenient for the particular job at hand.} Note that obviously $\models G$ and $\models G'$. The proof proceeds by induction on the rank $\mathrm{n}$ of $G'$.\\

\noindent (i) $\mathrm{n} = 0$.\\

\noindent Recall that since $\models G$, we get that either $\models \neg \bigwedge \mathcal{X}$ or $\models \bigvee \mathcal{Y}$ . Thus, the empty clause has to be either in $\mathcal{X}$ or in $\mathcal{Y}^*$:\footnote{Note that this is irrespective of whether $\mathcal{X}$ and $\mathcal{Y}$ have any variables in common. However, the implementation has to take a bit more nuanced approach where all the subcases are dealt with (essentially amounting to the same, see the description of the algorithm below).}\\

(Case 1) $\bot \in \mathcal{X}$. Then $\models \mathcal{X} \longrightarrow \bot$ and $\models \bot \longrightarrow \mathcal{Y}$. Thus the interpolant is $\bot$.\\

(Case 2) $\bot \in \mathcal{Y}^*$. Then $\models \mathcal{X} \longrightarrow \top$ and $\models \top \longrightarrow \mathcal{Y}$. Thus the interpolant is $\top$.\\

\noindent (ii) $\mathrm{n} > 0$.\\ 

\noindent Assume that the theorem holds for clausal normal forms of rank $< \mathrm{n}$.\\

\noindent Since $\mathrm{n} > 0$, there exists a literal $l$ such that $l$ and $l^*$ occur in distinct clauses of $\mathcal{X} \cup \mathcal{Y}^*$. Therefore, $G$ has the following form.

\begin{scriptsize}

$$G = A_1 \vee l~\ldots~A_{\mathrm{m}} \vee l, B_1 \vee l^*,..., B_{\mathrm{o}} \vee l^*, \mathcal{E}_1 \longrightarrow C_{1} \wedge l^*, \ldots, C_{\mathrm{p}} \wedge l^*, D_{1} \wedge l, \ldots,D_{\mathrm{r}} \wedge l, \mathcal{E}_2$$

\end{scriptsize}

(In short, this can be represented as $\mathcal{A},\mathcal{B},\mathcal{E}_1 \longrightarrow \mathcal{C},\mathcal{D},\mathcal{E}_2$.)\\


\noindent Thus, we can apply the refutation rules described above, obtaining:

$$G_1 = A_1,..., A_{\mathrm{m}}, \mathcal{E}_1 \longrightarrow C_{1}, ..., C_{\mathrm{p}}, \mathcal{E}_2$$

and

$$G_2 = B_1,..., B_{\mathrm{o}}, \mathcal{E}_1 \longrightarrow D_{1}, ...,D_{\mathrm{r}}, \mathcal{E}_2.$$


Note that in the above we do not exclude a situation when $l$ or $l^*$ are absent from either $\mathcal{X}$ or $\mathcal{Y}^*$  --- in effect making any of the $\mathcal{A},\mathcal{B}, \mathcal{C}$ or $\mathcal{D}$ empty --- as long as one finds $l$ and $l^*$ in different clauses.

Since $\models G$, by ($\ddag$) both $\models G_1$ and $\models G_2$. Now $G_1$ and $G_2$ can easily be transformed to clausal normal forms of rank~$< \mathrm{n}$ hence, by the inductive hypothesis, there are formulas $I(G_1)$ and $I(G_2)$, such that:\\

\noindent (I) $\models A_1, ..., A_{\mathrm{m}}, \mathcal{E}_1 \to I(G_1)$ and $\models I(G_1) \to C_{1}, \ldots, C_{\mathrm{p}}, \mathcal{E}_{2}.$\\

(In short: $\models \bigwedge \mathcal{X}_1 \to I(G_1)$ and $\models I(G_1) \to \bigvee \mathcal{Y}_1$.)\\

\noindent (II) $\models B_1,..., B_{\mathrm{o}}, \mathcal{E}_1 \to I(G_2)$ and $\models I(G_2) \to D_{1}, \ldots,D_{\mathrm{r}}, \mathcal{E}_{2}.$\\

(In short: $\models \bigwedge \mathcal{X}_2 \to I(G_2)$ and $\models I(G_2) \to \bigvee \mathcal{Y}_2$.)\\

\noindent Now, define $I(G)$ in the following manner.

\begin{equation*}
  I(G) =
  \begin{cases}
    I(G_1)\vee I(G_2) & \text{if neither $l$ nor $l^*$ occurs in $\mathcal{Y}^*$} \\
    I(G_1)\wedge I(G_2) & \text{if neither $l$ nor $l^*$ occurs in $\mathcal{X}$} \\
    (l \vee I(G_1)) \wedge (l^* \vee I(G_2)) & \text{otherwise}\\
  \end{cases}
\end{equation*}

\noindent We claim that $I(G)$ is an interpolant for $X$ and $Y$. Observe that since by definition $I(G_1)$ is built up from the variables common to $\mathcal{X}_1$ and $\mathcal{Y}_1$\footnote{We take liberties in applying this notion. We obviously mean that the variables occur in the elements of the specified sets but saying so a number of times adds unnecessary clutter.} and $I(G_2)$ is built up from the variables common to $\mathcal{X}_2$ and $\mathcal{Y}_2$, the resulting $I(G)$ is built up from the variables common to $\mathcal{X}$ and $\mathcal{Y}$.

We first deal with the situation, where at least one of $l, l^*$ occurs in $\mathcal{X}$ and similarly at least one of $l, l^*$ occurs in $\mathcal{Y}^*$, giving $I(G) = (l \vee I(G_1)) \wedge (l^* \vee I(G_2))$. In what follows, let $\mathcal{G}_1 = \{A_1 \vee l~\ldots~A_{\mathrm{m}} \vee l, B_1 \vee l^*,\ldots, B_{\mathrm{o}} \vee l^*\}$ and $\mathcal{G}_2 = \{C_{1} \wedge l^*, \ldots, C_{\mathrm{p}} \wedge l^*, D_{1} \wedge l, \ldots,D_{\mathrm{r}} \wedge l\}$ (note that $\mathcal{X} = \mathcal{G}_1 \cup \mathcal{E}_1$ and $\mathcal{Y} = \mathcal{G}_{2} \cup \mathcal{E}_{2}$). Assuming that $I(G)$ is not an interpolant, we have to consider two cases.\\

(Case 1) $\mathcal{G}_1, \mathcal{E}_1 \longrightarrow I(G) \notin \mathcal{CL}$.\\

\noindent In such a case, there has to exist a valuation $v~{\in}~\mathcal{V}$ such that $v(\bigwedge \mathcal{G}_1) = v(\bigwedge \mathcal{E}_1) = 1$ and~$v(I(G)) = 0$. We show that whatever value is assigned by $v$ to $l$, one obtains a contradiction. Assume $v(l) = 1$. This means that $v(l^*) = 0$. Since $v(B_i \vee l^*) =1$ and $v(l^*) = 0$, we have $v(B_i) = 1$ for every $i$. Hence $v(I(G_2)) = 1$ by II.
(If there are no $B_i$, then $I(G_2))$ is true because so is $\bigwedge\mathcal{E}_1$.).\footnote{This, and similar situations are encountered when neither $l$ nor $l^*$ occurs in $\mathcal{X}$ or $\mathcal{Y}^*$.} Thus $v(I(G)) = 1$. This is a contradiction. Assume now that $v(l) = 0$. This means that $v(l^*)=1$. Then since $v(A_i\vee l) = 1$, we have $v(A_i) = 1$ for every $i$. Hence $v(I(G_1)) = 1$ by I. (If there are no $A_i$, then $I(G_1))$ is true because so is $\bigwedge\mathcal{E}_1$.) Thus, $v(I(G)) = 1$. This is also a contradiction.

\bigskip

(Case 2) $ I(G) \longrightarrow \mathcal{G}_{2}, \mathcal{E}_{2} \notin \mathcal{CL}$.\\

\noindent In such a case, there has to exist a valuation $v~{\in~}\mathcal{V}$ such that $v(I(G)) = 1$. This means that $v(l \vee I(G_1)) = v(l^* \vee I(G_2)) = 1$. We also have~$v(\bigvee \mathcal{G}_{2})~= v(\bigvee \mathcal{E}_{2})~=~0$. Again, we show that whatever value is assigned by $v$ to $l$, one obtains a contradiction. Assume $v(l) = 1$. This means $v(l^*) = 0$, immediately giving $v(I(G_2)) = 1$. Thus we have either $v(\bigvee \mathcal{E}_{2}) = 1$ or $v(D_{i}) = 1$ for some~$i$, by~II. But we already have that $v(\bigvee \mathcal{E}_{2}) = 0$, forcing $v(D_{i}) = 1$ for~some~$i$. Hence, for some $i$, $v(D_{i} \wedge l) = 1$. Therefore $v(\bigvee \mathcal{G}_{2}) = 1$. (If there are no $D_i$, then $\bigvee\mathcal{E}_2$ is true because so is $I(G_2)$.) This is a contradiction. Now assume $v(l) = 0$. Then $v(l \vee I(G_1)) = 1 $. Since $v(I(G_1)) = 1$, we have $v(\bigvee \mathcal{E}_{2}) = 1$ or $v(C_{i}) = 1$ for some $i$ by I. But we~already have that $v(\bigvee \mathcal{E}_{2}) = 0$, so $v(C_{i}) = 1$ for some $i$. Hence $v(C_{i} \wedge l^*) = 1$. So $v(\bigvee \mathcal{G}_{2}) = 1$. (If there are no $C_i$, then $\bigvee\mathcal{E}_2$ is true because so is $I(G_1)$.) This is also a contradiction.\\

Now, consider the situation, where neither $l$ nor $l^*$ occurs in $\mathcal{X}$, giving $I(G) = I(G_1) \wedge I(G_2)$. In (Case 1), by the fact that there are neither $A_i$ nor $B_i$ we immediately obtain $v(I(G_1)) = v(I(G_2)) = 1$, by I and II, respectively. This yields $v(I(G)) = 1$, a contradiction. In (Case 2) we immediately get a contradiction by observing that $v(I(G_1)) = v(I(G_2)) = 1$.

Finally, consider the situation, where neither $l$ nor $l^*$ occurs in $\mathcal{Y}^*$, giving $I(G) = I(G_1)\vee I(G_2)$. In (Case 1) the contradiction follows in the same way as in the main line of reasoning. In (Case 2) by the fact that there are neither $C_i$ nor $D_i$ we immediately obtain $v(I(G_1)) = v(I(G_2)) = 1$, by I and II, respectively. This also yields $v(I(G)) = 1$, a contradiction. \end{proof}

It should be obvious how the procedure operates given a formula satisfying the conditions of the theorem. If the rank of this formula is $0$, we obtain $\bot$ or $\top$ as the interpolant, otherwise we apply the ($\ddag$) property to eliminate a pair of literals (shifting from normal to clausal normal forms in the process). We continue until we obtain a number of formulas of rank equal $0$ and on this basis built the interpolant for the initial formula. Let us note that the procedure can work quite rapidly, given that certain relations between the formulas on both sides of the implication are taken into account. Consider the following example.\\

\begin{ex}
{\rm Suppose that we want to find an interpolant for the formulas $\bigwedge \mathcal{X}$ and $\bigvee \mathcal{Y}$, where: 
$$\mathcal{X} = \{p \vee \neg s, q \vee r, \neg p \vee \neg q, \neg r \vee s \}$$ and $$\mathcal{Y} = \{ \neg p \wedge r, \neg p \wedge q, p \wedge \neg q \}.$$

(The formula $G = \mathcal{X} \longrightarrow \mathcal{Y}$ is a clausal normal form of rank $4$). Say we choose $p$ and $\lnot p$ as the pair of literals to be eliminated. We obtain the following:
\begin{enumerate} 
\item $G_1$ is $ \neg s, q \vee r, \neg r \vee s  \longrightarrow r, q $. Seeing that the formula $r \vee q$ occurs both in $X_1$ and $Y_1$, we choose this formula as an interpolant for $X_1$ and $Y_1$. 
\item $G_2$ is $q \vee r, \neg q, \neg r \vee s \longrightarrow \neg q$. Seeing that the formula $\neg q$ occurs both in $X_2$ and $Y_2$, we choose this formula as an interpolant for $X_2$ and $Y_2$. 
\item The interpolant for $X$ and $Y$ has the following shape: $(p \vee r \vee q) \wedge (\neg p \vee \neg q)$.\\

\end{enumerate}
}
\end{ex}

Thus, we were able to skip some parts of the procedure using convenient shortcuts and significantly reduce the number of steps required. In the next section we show how the procedure works without taking advantage of such solutions.

\section{Implementation}

We have implemented the above procedure in the form of a Python script. Before we move on to the presentation of the implementation itself, a few words of caution are in order. As mentioned, the above example contains certain simplifications, justified from our human perspective as helping in shortening the search for an interpolant, but which nevertheless means that the original procedure is altered. The task of presenting the procedure in full, not optimised for human readability, is however, crucial for building its practical implementation. It is important, as it helps one to appreciate and understand the inner workings of the original procedure, providing a clear picture of what is happening. Another reason for doing so is that some of the perceived simplifications might turn out not to be that beneficial from the point of view of how the implementation works. Let us now revisit the example above to clarify the matter. This will also be an occasion for us to describe the notation used in our program. Recall that in this case we were to find an interpolant for (assuming the notational conventions carry over): $\mathcal{X} = \{p \vee \neg s, q \vee r, \neg p \vee \neg q, \neg r \vee s \}$ and $\mathcal{Y} = \{ \neg p \wedge r, \neg p \wedge q, p \wedge \neg q \}$. This in our notation becomes \texttt{[D.p..Ns., D.q..r., D.Np..Nq., D.Nr..s.]} and \texttt{[C.p..Nr.,C.Np..q., C.p..Nq.]} respectively. Now, the interpolant procedure as described above is quite short, due to the fact that at some point we find the same formulas on both sides of the implication. However, the pure procedure does not make use of this convenient shortcut. In fact, the full interpolant searching procedure looks as follows. As above, let us choose \texttt{.p.} and \texttt{.Np.} as the pair of literals to be eliminated. We represent this as follows:\\

\begin{scriptsize}


\begin{center}

\texttt{X:  [D.p..Ns., D.q..r., D.Np..Nq., D.Nr..s.] Y:  [C.p..Nr., C.Np..q., C.p..Nq.]}\\
\bigskip
\end{center}
\begin{tiny}
\texttt{X:  [D.Ns., D.q..r., D.Nr..s.] Y:  [C.q.]} \hfill \texttt{X:  [D.Nq., D.q..r., D.Nr..s.] Y:  [C.Nr., C.Nq.]}\\
\end{tiny}

\end{scriptsize}

Note that, in each computing step we produce \texttt{Y'} according to the rules set out above --- in short by replacing all \texttt{C}'s with \texttt{D}'s and reversing negations (shifting the contents of \texttt{Y} to the left). Now, where humans might take a shortcut, the full procedure must continue as long as there are literals to be eliminated. Therefore we execute a recursive call on the resulting two formulas without \texttt{.p.} and \texttt{.Np.} (note that we eliminate from \texttt{Y'}, therefore what was \texttt{.p.} in \texttt{Y'} becomes \texttt{.Np.} in \texttt{Y} and the other way round). Going in the depth-first fashion (as recursion tends to operate), say that we chose to eliminate the pair \texttt{.q.} and \texttt{.Nq.} from \texttt{X:  [D.Ns., D.q..r., D.Nr..s.] Y:  [C.q.]} obtaining:\\

\begin{scriptsize}


\begin{center}

\texttt{X:  [D.Ns., D.q..r., D.Nr..s.] Y:  [C.q.]}\\
\bigskip
\end{center}

\texttt{X:  [D.r., D.Ns., D.Nr..s.] Y:  []} \hfill \texttt{X:  [D.Ns., D.Nr..s.] Y:  [.1.]}\\

\end{scriptsize}

Consider now the formula \texttt{X:  [D.r., D.Ns., D.Nr..s.] Y:  []}. Since \texttt{Y} represents the empty set, we obviously have that there are no common variables shared by \texttt{X} and \texttt{Y}, thus landing us in one of the base cases for this recursive procedure. The theoretical result guarantees that in any given base case either the negation of \texttt{X} or \texttt{Y} itself is a tautology. Since \texttt{Y} is empty,\footnote{One can think of \texttt{Y} as containing $\bot$. This is in fact the approach taken up in the theoretical part. We decided to not to implement it this way, as requiring an additional (and unnecessary) computational step.} the former has to obtain. Indeed, a glance at \texttt{X} is enough to convince ourselves that its negation has to be tautologous: therefore the interpolant is $\bot$ (or \texttt{.0.} in our notation). Consider now \texttt{X:  [D.Ns., D.Nr..s.] Y:  [.1.]}. In this case we see that \texttt{Y}'s only element is \texttt{.1.} (that is $\top$, it appeared here as a result of the last disjunct of \texttt{Y'} being eliminated in the previous step and the result --- with flipped value --- carrying over to the other side of implication).\footnote{The convention is that empty conjuncts (\texttt{C} without any arguments) equal $\top$ (or \texttt{.1.} in our notation) and empty disjuncts (\texttt{D}) equal $\bot$ (\texttt{.0.}).} Thus we obtain the interpolant in the form of \texttt{.1.}. Going back up, let us consider the rightmost result \texttt{X:  [D.Nq., D.q..r., D.Nr..s.] Y:  [C.Nr., C.Nq.]} and eliminate \texttt{.r.}/\texttt{.Nr.}:\\

\begin{scriptsize}


\begin{center}

\texttt{X:  [D.Nq., D.q..r., D.Nr..s.] Y:  [C.Nr., C.Nq.]}\\
\bigskip
\end{center}

\texttt{X:  [D.q., D.Nq.] Y:  [.1., C.Nq.]} \hfill \texttt{X:  [D.s., D.Nq.] Y:  [C.Nq.]}\\

\end{scriptsize}

Moving forward, note that humans can easily spot that the negation of \texttt{X} in the leftmost formula is a tautology. The procedure, however, has no choice: it eliminates \texttt{.q.}  and \texttt{.Nq.} from \texttt{X:  [D.q., D.Nq.] Y:  [.1., C.Nq.]} without taking heed of this additional piece of information:\\

\begin{scriptsize}


\begin{center}

\texttt{X:  [D.q., D.Nq.] Y:  [.1., C.Nq.]}\\
\bigskip
\end{center}

\texttt{X:  [.0.] Y:  [.1., .1.]} \hfill \texttt{X:  [.0.] Y:  [.1.]}\\

\end{scriptsize}

Only now can we finish, seeing that we have again reached two base cases. With both \texttt{X} and \texttt{Y} not empty the procedure is to check whether the negation of \texttt{X} is a tautology. Since this is the case, we obtain the interpolant to be \texttt{.0.};\footnote{Otherwise we would have that \texttt{Y} is a tautology as guaranteed by the theoretical considerations.} the rightmost formula yields identical result. There is only one formula now to be taken apart and it is \texttt{X:  [D.s., D.Nq.] Y:  [C.Nq.]} and we do so by removing the pair \texttt{.q.} and \texttt{.Nq.}:\\

\begin{scriptsize}


\begin{center}

\texttt{X:  [D.s., D.Nq.] Y:  [C.Nq.]}\\
\bigskip
\end{center}

\texttt{X:  [D.s.] Y:  [.1.]} \hfill \texttt{X:  [.0., D.s.] Y:  []}\\

\end{scriptsize}


Remembering what has been said already, we see that in the leftmost formula \texttt{Y} being a tautology means that the interpolant is \texttt{.1.}, whereas for the rightmost formula we see that the negation of \texttt{X} is a tautology, thus the interpolant is \texttt{.0.}. Finally, we arrive at the interpolant of the following shape (presented in a more standard logical notation):

\begin{footnotesize}

$$\{p \vee [(q \vee \bot) \wedge (\lnot q \vee \top)] \} \wedge  \{\lnot p \wedge [ (r \vee ( (q \vee \bot) \wedge (\lnot q \vee \bot) )) \wedge (\lnot r \vee ((q \vee \top) \wedge (\lnot q \vee \bot) )) ] \}$$

\end{footnotesize}

\bigskip

Now, this is very hard on the human eye. It can be obviously simplified, noticing that e.g. $q \vee \bot$ can be replaced with $q$.\footnote{In fact, a version of the program contains a function doing exactly that. See the section describing the experiment for details.} Still, it is a far cry from the example as presented in the theoretical part.

\subsection{The program}

Our aim was to develop \emph{a proof-of-concept} implementation, rather than a fully-fledged one. This means that we eschew any optimisation attempts, including the simplifications described above. For that reason, we also decided to implement the procedure in the case of formulas with at most four variables. This has the benefit of providing a relatively light-weight interpolant finder. It also has its drawbacks, as it will become evident in the section describing experiments run using this software. 

We now present a general set-up fit for any number of variables. Consider the following pseudocode. This function takes two formulas $X$,$Y$ as an input, s.t. $X \to Y$ is a tautology, $X$ is in a CNF and $Y$ is in a DNF.\\

\begin{algorithmic}
    \IF{$X$ and $Y$ have variables in common}
        \STATE create $Y^*$
	\STATE \IF{The set of pairs of literals in $X, Y^* \to \bot$ is not empty}
			\STATE randomly choose one pair of literals (say $l$, $l^*$)
			\STATE create $G_1$
			\STATE create $G_2$
			\STATE create an interpolant of the required form, containing $l$ and $l^*$ together with the recursive calls using $G_1$ and $G_2$
		\ELSE 
			\STATE \IF{$\lnot X$ is a tautology} 
					\STATE interpolant is $\bot$
				\ELSE
					\STATE interpolant is $\top$
				\ENDIF				
		\ENDIF
    \ELSE
        \STATE \IF{$X$ is not empty}
			\STATE \IF{$Y$ is not empty}
					\STATE \IF{$\lnot X$ is a tautology}
							\STATE interpolant is $\bot$
						\ELSE
							\STATE interpolant is $\top$
						\ENDIF
				\ELSE
					\STATE interpolant is $\bot$
				\ENDIF
		\ELSE
			\STATE interpolant is $\top$
		\ENDIF
    \ENDIF
\end{algorithmic}




We believe that the presented example gives a clear idea how the procedure operates in a particular case and given the general description above, one is able to easily figure out how the procedure behaves in situations that go beyond the example we have presented. Therefore, we feel there is no need for a more involved description of all the possibilities, as it also bears the risk of unnecessarily muddling things up. Perhaps we should add a word on how literal elimination works. Initially we have \texttt{X} to the left and \texttt{Y} to the right of the implication. We then move  \texttt{Y} left, creating \texttt{Y'} (by changing disjuncts in CNF to conjuncts in DNF). It is in this setting where the literal elimination takes place. The remaining formulas are then redistributed to new \texttt{X}'s and \texttt{Y}'s (where the remaining conjuncts from \texttt{Y'} are converted back to disjuncts), in accordance with the rules specified for the creation of $G_1$ and $G_2$.\footnote{We remark that whereas the theoretical procedure starts with formulas where literals and their complements do not occur in one clause and where there is at most one occurrence of a given literal in a clause, our implementation is less picky when it comes to the input formulas. We need to ensure, however, that the rank decreases with each step. To prevent this, all occurrences of a given literal in a given clause are eliminated \emph{simultaneously}. Consider the following example:\\

\begin{scriptsize}


\begin{center}

\texttt{X:  [D.p..Np., D.p..p.] Y:  [C.p.]}\\
\bigskip
\end{center}

\texttt{X:  [D.Np., .0.] Y:  []} \hfill \texttt{X:  [D.p.] Y:  [.1.]}\\

\end{scriptsize}

Here, the fact that all occurrences of \texttt{.p.} are eliminated at once from \texttt{D.p..p.} meant that it changes into \texttt{.0.} in the left-hand formula. Note that the input formula is a non-starter in terms of theoretical considerations and needs to be pre-processed first. This does not have to happen in the presented implementation. The procedure returns $(p \vee \bot) \wedge (\lnot p \vee \top)$ as an interpolant. (The input formula is obviously equivalent to \texttt{X:  [D.p.] Y:  [C.p.]}, which is analysed in the body of the text.)}

A note on the correctness of the described algorithm. In the base cases, the theoretical result guarantees that either a negation of the left side is a tautology or the right side is so. Thus, we only have to check which of the two possibilities obtains.\footnote{By the same token, if one of the sides is empty, we do not have to check the validity of the remaining formula (or its negation) --- it is guaranteed by what is set out in the proof of the main theorem.} The next step ensures that the algorithm builds a proper interpolant on top of the base case results (as specified in the proof of the main theorem). Looking at the pseudocode, one might wonder what happens in the case, when both sides of the implication are empty. After all, this situation is not covered! We remind the reader that by the way $G_1$ and $G_2$ are constructed, as mentioned in the description of the example, empty conjuncts and --- in effect --- disjuncts are replaced by $\bot$ and $\top$, respectively. This ensures that we never get to the situation where both sides of the implication are empty. To see that, consider the following example. The only way, in which we could possibly obtain two empty lists is something akin to \texttt{X: [D.p.] Y:  [C.p.]}.\footnote{Note that e.g. \texttt{X: [D.p.] Y:  [C.Np.]} would mean that there are no pairs of literals to be eliminated from \texttt{X} and \texttt{Y'}.}
 Here, \texttt{Y'} would only contain \texttt{D.Np.}, which is a one-element DNF. The pair of literals to be eliminated is, obviously, \texttt{.p.} and \texttt{.Np.}. It might seem as if we should obtain the empty sets by doing so, but in fact it looks as follows.\footnote{The procedure returns the formula $(p \vee \bot) \wedge (\lnot p \vee \top)$ as the interpolant (the same as in the case of \texttt{X:  [D.p..Np., D.p..p.] Y:  [C.p.]}). This can be easily simplified to $p$, which agrees with our human intuitions.}\\

\begin{scriptsize}


\begin{center}

\texttt{X:  [D.p.] Y:  [C.p.]}\\
\bigskip
\end{center}

\texttt{X:  [.0.] Y:  []} \hfill \texttt{X:  [] Y:  [.1.]}\\

\end{scriptsize}

The constants \texttt{.0.} and \texttt{.1.} appear in the resulting formulas since we either eliminate the final one-element disjunct from \texttt{X} or \texttt{Y'} (which is \texttt{[D.Np.]}) (which then gets transferred to the other side of implication as \texttt{Y}), meaning that there can be only one empty set in each of the leaves of this tree (it should be clear how the empty sets come to be).

One final point concerns a comparison of HKP-system with the one we are presenting here. Despite superficial similarities --- the fact that the interpolants are constructed using the same schemas --- there is a difference that can potentially be of importance. As mentioned above, we do not make use of binary resolution when constructing interpolants and instead utilise a refutation method akin to non-binary resolution. It seems that owing to this, there is potential to produce interpolants in a fewer number of steps. Consider the following example mentioned in \cite{DSilva2010}. One is to find an interpolant for $(p \vee \lnot q) \wedge (\lnot p \vee \lnot r) \wedge q$ and $(\lnot q \vee r) \wedge (q \vee s) \wedge \lnot s$. The accompanying diagram in \cite{DSilva2010} shows that one requires five steps in producing interpolant using HKP-system. In our notation this example becomes \texttt{X:  [D.p..Nq.,D.Np..Nr.,D.q.]} and \texttt{Y:  [C.q..Nr.,C.Nq..Ns.,C.s.]}. The entire procedure can take as little as two steps: by eliminating first the pair \texttt{.q.}, \texttt{.Nq.} and then the pair \texttt{.r.}, \texttt{.Nr.} we conclude the interpolant search.

As already mentioned, our implementation is in a form of a simple Python script, designed to be run on essentially any reasonable Linux distribution.\footnote{The program we developed is available for download at the following location: \url{https://iphils.uj.edu.pl/~a.trybus/code/interpolant.zip}.} At its core, it has a function \texttt{interpolant(X,Y)} implementing the procedure described above in a relatively straightforward manner. Apart from that, the main file contains a number of helper functions, in most cases allowing one to parse or translate the formulas into different notation for further processing or functions used within the main function in shuffling parts of the formulas back and forth, selecting the desired elements, removing them etc. The main file also contains additional functions used in pretty-printing the results. It is accompanied by other files: used for tautology checking, collecting and plotting the experimental data, etc.

\subsection{Experimental results}

We run a small number of tests involving the described procedure. The experiment was divided into two parts. In the first part, the scope of those tests was limited by the constraint on the number of variables. We designed a pseudo-random formula generator with adjustable number of conjuncts ($c > 0$), disjuncts ($d > 0$). Each time when a conjunct/disjunct is generated, the program decides on the number of variables it is to contain (less than or equal to $4$).\footnote{Note that by design, the variables or their negations cannot occur more than once in a given disjunct.} with a $50\%$ chance for a given variable being negated. Thus we end up with two lists, representing $X$ and $Y$. The final constraint was that the formula $X \to Y$ is a tautology (only then the interpolant search is reasonable). We then run the program in the loop in two different settings: (a) producing $100000$ formulas with $c, d \leq 10$, and (b) producing $50000$ formulas with $c,d = 20$. Finally, all generated formulas were plugged in to \texttt{interpolant(X,Y)} and the results of the interpolant search were recorded.\footnote{The datasets are made available at the location mentioned above. Note that by design. The function \texttt{interpolant(X,Y)} produces interpolants in a pseudo-random fashion, in each run pseudo-randomly selecting a potential literal to be eliminated.} We tracked execution time (in seconds), size of $X,Y$ and of the resulting interpolants, the number of connectives (including negations) and number of variables in all formulas.\footnote{We run our tests on a machine with Intel Celeron CPU 1000M @ 1.80GHz and 4GB RAM.} For the case (a) the average execution time (rounded to two significant figures) was $0.0032$ and size of interpolant (rounded to the nearest natural number) was  $165$; in case (b) it was $0.011$ and $346$ respectively. Figure \ref{figex1} shows the relationship between the execution time and the size of $X \to Y$ in cases (a) and (b). We have removed a handful of outliers (most likely resulting from quirks in the measurement process) for case (a) to increase the visibility of the main trends in data. The unadulterated datasets are also made available at the described location. Figure \ref{figex2} shows the relationship between the size of interpolant and the execution time for cases (a) and (b). 

\begin{figure}[h!]
    \centering

        \subfigure[]{\includegraphics[width=0.45\textwidth]{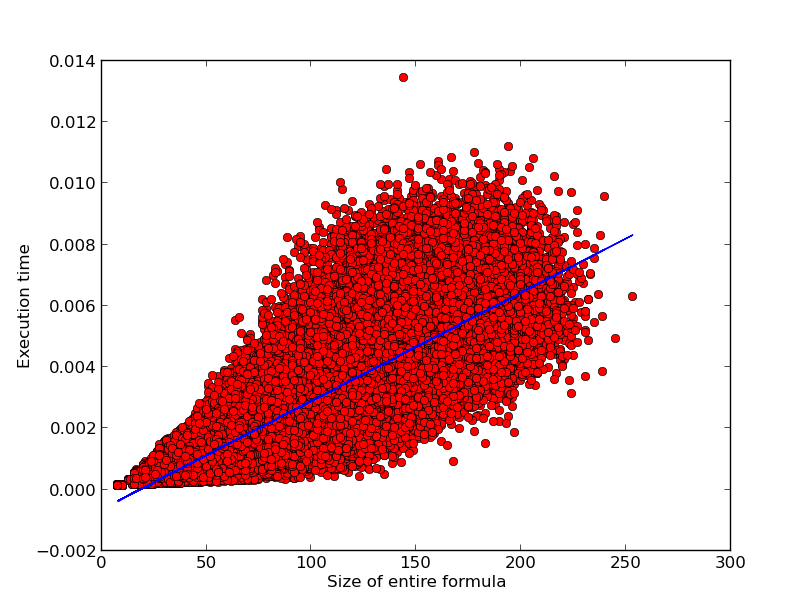}}
        \subfigure[]{\includegraphics[width=0.45\textwidth]{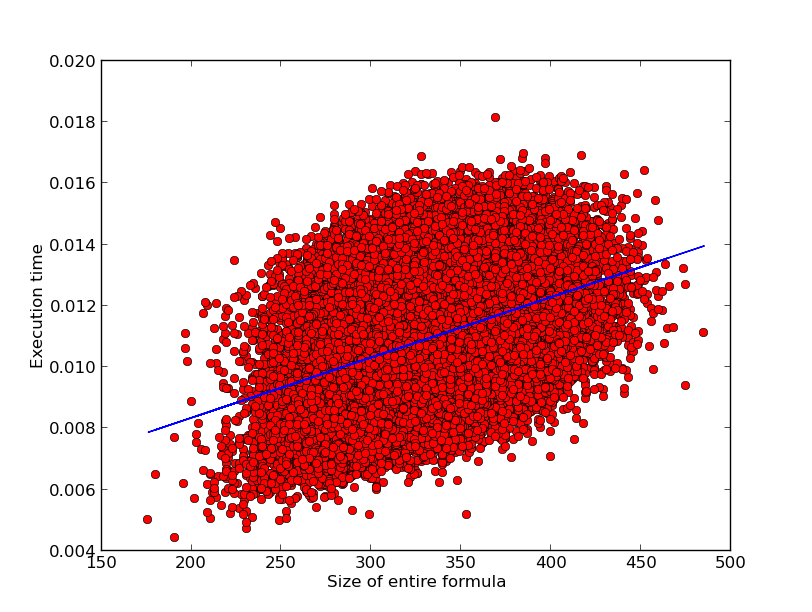}}

    \caption{Execution time [sec] ($y$ axis) and the size of $X \to Y$ ($x$ axis) for various experimental settings. Regression line in blue.}\label{figex1}
\end{figure}

\begin{figure}[h!]
    \centering

        \subfigure[]{\includegraphics[width=0.65\textwidth]{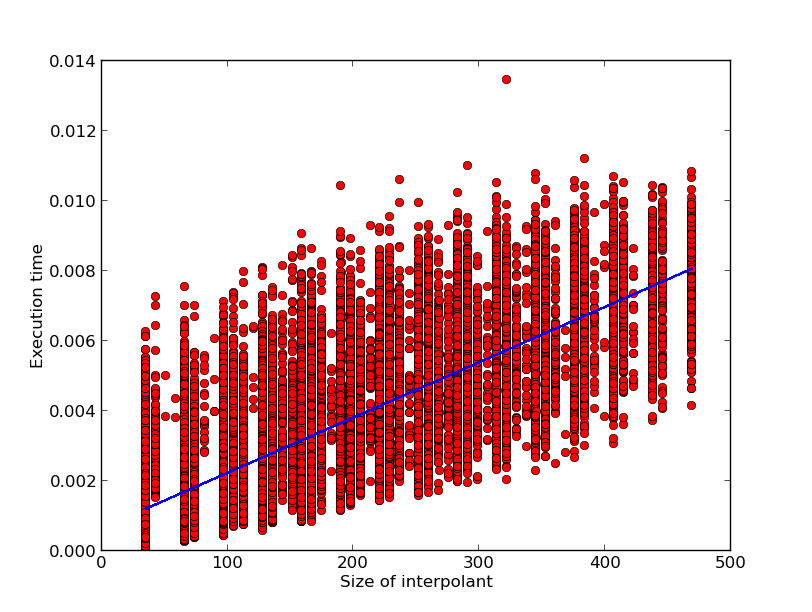}}
        \subfigure[]{\includegraphics[width=0.65\textwidth]{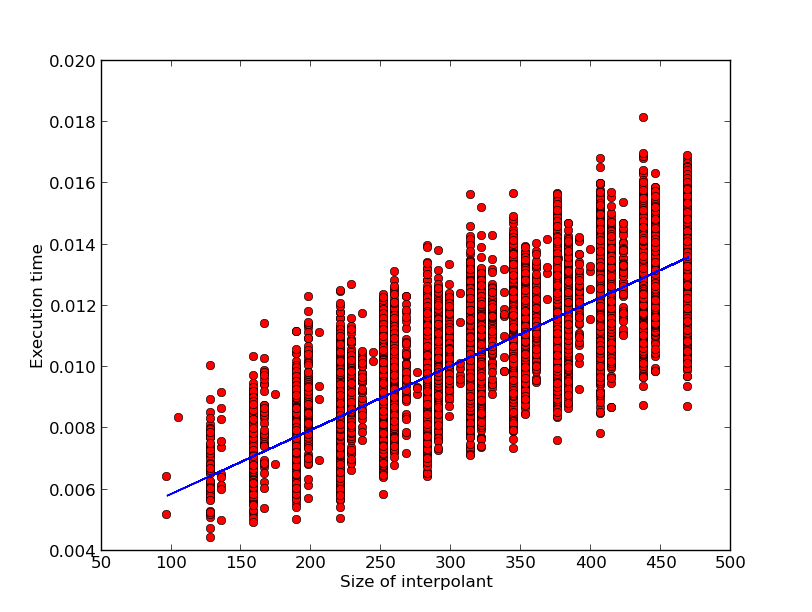}}\\

    \caption{Execution time [sec] ($y$ axis) and the size of interpolant ($x$ axis). Regression line shown in blue.}\label{figex2}
\end{figure}

Since in all the plots the regression line fits well in the general shape of data-points, this suggest a linear relationship between the values of $x$ and $y$. No doubt more tests are required to establish a more definite relationship. We feel that increasing the values of $c$ and $d$ might not result in radical changes in the tracked data, since with the increase in $c$ and $d$ values, the formulas created become more and more repetitive, due to the fact that we limited ourselves to four variables, potentially reducing the computational burden.

The second part of the experiment involved a smaller sample of formulas, but the restriction on the number of variables was lifted. The only changes in the implementation involved an extension allowing the use of essentially any number of variables represented as \texttt{P(a,b)}, where \texttt{a} is the number assigned to this specific variable and \texttt{b} is the total number of variables in the given formula (of course \texttt{a} $\leq$ \texttt{b}).\footnote{The extended code is also made available (\texttt{interpolant-extended.py}). As alluded to before, it also contains a function \texttt{simplify(X)} taking a formula (in effect, an interpolant) in the format accepted by our parser and simplifying it by reducing the conjuncts and disjuncts containing verum or falsum constants in the well-known way. Note that this procedure can easily be a built-in feature of the interpolant-finding function itself, thus simplifying the resulting interpolant on the fly.} So, for example \texttt{P(1,2)} represents the variable number one out of two variables total whereas \texttt{P(13,14)} is the variable number thirteen out of fourteen in total, etc. Note that the the first example can simply be rendered as \texttt{p} whereas the second has no similar simple representation. In this part of the experiment, the algorithm has been tested on $1000$ randomly-generated formulas with at most $10$ variables, $20$ conjuntcs and $20$ disjuncts. The results, presented in Fig. \ref{figfinal} seem to confirm the tendency noticed in the previous experimental setting with fewer variables, namely that of a linear performance of the implementation (the relation between execution time and size of interpolant) with some results indicating a more complicated situation (the relation between execution time and size of the original formula).

\begin{figure}[h!]
    \centering

        \subfigure[]{\includegraphics[scale=0.37]{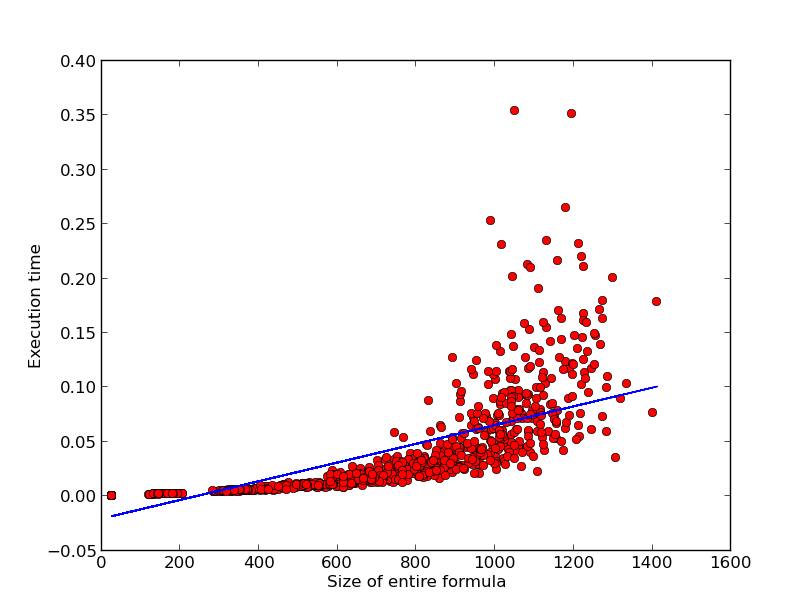}}
        \subfigure[]{\includegraphics[scale=0.37]{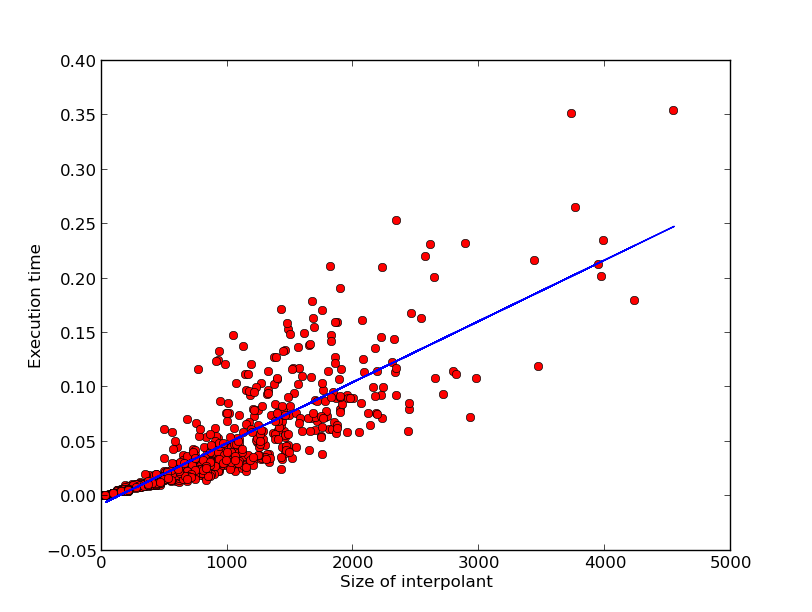}}

    \caption{Execution time [sec] and (a) the size of $X \to Y$ (b) the size of the interpolant. Regression lines in blue.}\label{figfinal}
\end{figure}

\section{Conclusions}

This article described presented a conceptually simple method for finding interpolants for formulas in classical propositional logic. In contrast to the standard approaches, our method comes from the research on the refutation systems. Since the proof of the interpolation theorem is constructive, it lends itself well to potential practical uses. We took advantage of that and proposed a proof-of-concept interpolant finding program. The interpolants produced by our program though corresponding well to the pure theoretical procedure are perhaps not quite the eye candy. Our aim, however, was simply to show the practical applicability of the theoretical results and therefore there is, obviously, still room for improvement and growth. Our system has some potential advantages over similar propositions: as a result of it not being limited by the binary resolution it can produce results in a fewer number of steps. However, since the approaches presented in the literature dealt with first-order interpolant systems, the most important challenge ahead would be to extend our system beyond propositional logic. This theoretical challenge is, obviously, coupled with the more practical one of designing a relevant computer program.

\bibliographystyle{unsrtnat}
\bibliography{references}  






\end{document}